\newtheorem{assumption}{Assumption}
\newtheorem{proposition}{Proposition}
\newenvironment{proof}{ \textbf{Proof:} }{ \hfill $\Box$}
\newcommand{\figref}[1]{{Fig.}~\ref{#1}}
\def\bb0{{\mathbb{0}}}
\def\ba{{\mathbf{a}}}
\def\bb{{\mathbf{b}}}
\def\bff{{\mathbf{f}}}
\def\bg{{\mathbf{g}}}
\def\bh{{\mathbf{h}}}
\def\bm{{\mathbf{m}}}
\def\bv{{\mathbf{v}}}
\def\bx{{\mathbf{x}}}
\def\b0{{\mathbf{0}}}
\def\bA{{\mathbf{A}}}
\def\bB{{\mathbf{B}}}
\def\bI{{\mathbf{I}}}
\def\bR{{\mathbf{R}}}
\def\bW{{\mathbf{W}}}
\def\bbE{{\mathbb{E}}}
\def\cA{\mathcal{A}}
\def\cN{\mathcal{N}}
\def\sf0{{\mathsf{0}}}
\newcommand{\sref}[1]{{Section}~\ref{#1}}
\newcommand{\pinv}[1]{\ensuremath{#1^{\dagger}}} 	
\def\bsx{\boldsymbol{x}}
\begin{document}
\title{Deep Learning for TDD and FDD Massive MIMO: Mapping Channels in Space and Frequency}
\author{Muhammad Alrabeiah and Ahmed Alkhateeb\\  {Arizona State University, Email: $\{$malrabei, aalkhateeb$\}$@asu.edu}}
\maketitle

\begin{abstract}

Can we map the channels at one set of antennas and one frequency band to the channels at another set of antennas---possibly at a different location and a different frequency band? If this channel-to-channel mapping is possible, we can expect dramatic gains for massive MIMO systems. For example, in FDD massive MIMO, the uplink channels can be mapped to the downlink channels or the downlink channels at one subset of  antennas can be mapped to the downlink channels at all the other antennas. This can significantly reduce  (or even eliminate) the downlink training/feedback overhead. In the context of cell-free/distributed massive MIMO systems, this channel mapping can be leveraged to reduce the fronthaul signaling overhead as  only the channels at a subset of the distributed terminals need to be fed to the central unit which can map them to the channels at all the other terminals. This mapping can also find interesting applications in mmWave beam prediction, MIMO radar, and massive MIMO based positioning. 

In this paper, we introduce the new concept of \textit{channel mapping in space and frequency}, where the channels at one set of antennas and one frequency band are mapped to the channels at another set of antennas and  frequency band. First, we prove that this channel-to-channel mapping function exists under the condition that the mapping  from the candidate user positions to the channels at the first set of antennas is bijective; a condition that can be achieved with high probability in several practical MIMO communication scenarios. Then, we note that the channel-to-channel mapping function, even if it exists, is typically unknown and very hard to characterize analytically as it heavily depends on the various elements of the surrounding environment. With this motivation, we propose to leverage the  powerful learning capabilities of deep neural networks to learn (approximate) this complex channel mapping function. For a case study of distributed/cell-free massive MIMO system with 64 antennas, the results show that acquiring the channels at only 4-8  antennas can be efficiently mapped to the channels at all the 64 distributed antennas, even if the 64 antennas are at a different frequency band.  Further, the 3D ray-tracing based simulations show that the achievable rates with the predicted channels achieve near-optimal data rates when compared to the upper bound with perfect channel knowledge. This highlight a novel solution for reducing the training and feedback overhead in mmWave and massive MIMO systems thanks to the powerful learning capabilities of deep neural networks.

 \end{abstract}

\section{Introduction} \label{sec:Intro}
 
 Scaling the number of antennas up is a key characteristic of current and future wireless systems \cite{HeathJr2016,Boccardi2014,11ad,Andrews2014,Sanguinetti2019}. Harvesting the multiplexing and beamforming gains of the large numbers of antennas, however, requires the channel knowledge at these antennas. This is normally associated with large training and feedback overhead that can limit the scalability of massive MIMO systems \cite{Bjoernson2016,HeathJr2016}. In this paper, we ask an important question: {If we know the channels between a user and a certain set of antennas at one frequency band, can we map this knowledge to the channels at a different set of antennas and at a different frequency band?} If such mapping exists and we can characterize or model it, this can yield significant gains for both co-located and distributed massive MIMO systems. Essentially, this mapping means that we can directly predict the downlink channels from the uplink channels, eliminating the downlink training and feedback overhead in co-located/distributed FDD massive MIMO systems. In the context of TDD cell-free massive MIMO, this mapping implies that the channels at only a subset of the distributed terminals need to be fed to the central processing unit that will use them to predict the channels at all the other terminals, which reduces the fronthaul signaling overhead and allows these distributed systems to scale. Motivated by these interesting gains, this paper investigates the existence and modeling of this channel-to-channel mapping function at both space and frequency.

 \begin{figure}[t]
 	\centering
 	\includegraphics[width=.95\linewidth]{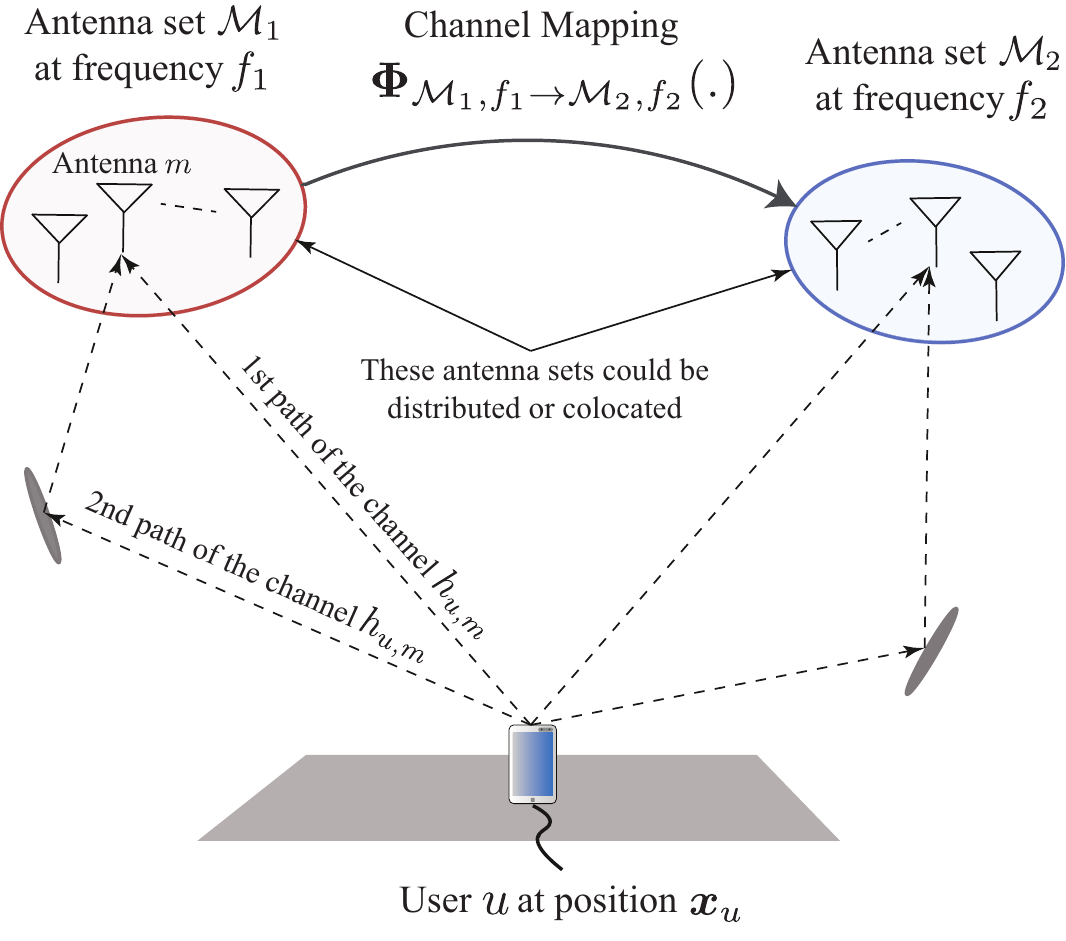}
 	\caption{The figure illustrates the general system model we adopt in this paper. The user is communicating with one of the two sets of antennas, $\mathcal M_1$ at frequency $f_1$ or $\mathcal M_2$ or frequency $f_2$. The function $\boldsymbol{\Phi}_{\mathcal M_1, f_1\rightarrow\mathcal M_2,f_2}(.)$ maps the channels at set $\mathcal M_1$ to those at set $\mathcal M_2$. }
 	\label{fig:map}
 \end{figure}

 Estimating the channels at one frequency band using the channel knowledge at a different frequency band is attracting increasing interest \cite{Vasisht2016,Rottenberg2019,Ali2018,Maschietti2019}. In \cite{Vasisht2016}, the parameters of the uplink channels, such as the angles of arrival and path delays were estimated and used to construct the downlink channels at an adjacent frequency band.  This frequency extrapolation concept was further studied in \cite{Rottenberg2019} where lower bounds on the mean squared error of the extrapolated channels were derived. On a relevant line of work, \cite{Ali2018,Maschietti2019} proposed to leverage the channel knowledge at one frequency band (sub-6 GHz) to reduce the training overhead associated with design the mmWave beamforming vectors, leveraging the spatial correlation between the two frequency bands. A common feature of all the prior work in \cite{Vasisht2016,Rottenberg2019,Ali2018,Maschietti2019} is the requirement to first estimate some spatial knowledge (such as the angles of arrival) about the channels in one frequency band and then leverage this knowledge in the other frequency band. This channel parameter estimation process, however, is fundamentally limited by the system and hardware capability in resolving these channel parameters, which highly affects the quality of the extrapolated channels.

In this paper, we introduce the novel concept of \textit{channel mapping in space and frequency} that does not just translate the channel knowledge from one to another frequency band but also from one set of antennas to another set of co-located or distributed antennas. First, we prove that such channel-to-channel mapping function exists  under the condition that the mapping function from the candidate user positions to the channels at the first set of antennas is bijective. Then, we show that deep neural networks can be leveraged to learn (approximate) this complex channel mapping function if a sufficient number of observations is available. These general results have the potential of significantly reducing the channel training/feedback overhead at both co-located and distributed/cell-free FDD massive MIMO systems. In fact, leveraging such channel-to-channel mapping functions can find interesting applications in other systems as well, such as reducing the fronthaul signaling overhead in TDD cell-free massive MIMO systems and reducing the beam training overhead in mmWave MIMO applications \cite{Alkhateeb2018,Li2018a,Alkhateeb18a,Taha2019}. Simulation results, based on accurate ray-tracing, confirm the potential gains of the proposed deep learning based channel mapping approach. For example, for a given setup with 64 distributed antennas, the results show that acquiring the channels at only 8 of these antennas at an uplink frequency of $2.4$GHz can be efficiently mapped to the $2.5$GHz downlink channels at all the 64 antennas. Adopting conjugate beamforming with the predicted downlink channels, the results show that near-optimal data rates can be achieved when compared with the upper bound with perfect channel knowledge. This highlights a novel solution for reducing the training and feedback overhead in mmWave and massive MIMO systems thanks to the powerful learning capabilities of deep neural networks.

\textbf{Notation}: We use the following notation throughout this paper: $\bA$ is a matrix, $\ba$ is a vector, $a$ is a scalar, and $\cA$ is a set. $|\bA|$ is the determinant of $\bA$, $\|\bA \|_F$ is its Frobenius norm, whereas $\bA^T$, $\bA^H$, $\bA^*$, $\bA^{-1}$, $\pinv{\bA}$ are its transpose, Hermitian (conjugate transpose), conjugate, inverse, and pseudo-inverse respectively. $[\bA]_{r,:}$ and $[\bA]_{:,c}$ are the $r$th row and $c$th column of the matrix $\bA$, respectively. $\mathrm{diag}(\ba)$ is a diagonal matrix with the entries of $\ba$ on its diagonal. $\bI$ is the identity matrix and $\mathbf{1}_{N}$ is the $N$-dimensional all-ones vector. $\bA \otimes \bB$ is the Kronecker product of $\bA$ and $\bB$, and $\bA \circ \bB$ is their Khatri-Rao product. $\cN(\bm,\bR)$ is a complex Gaussian random vector with mean $\bm$ and covariance $\bR$. $\bbE\left[\cdot\right]$ is used to denote expectation.

\section{System and Channel Models} \label{sec:Sys_Model}
Consider the general system model in \figref{fig:map} where one user at position $\bsx_u$ can communicate with one of two candidate sets of antennas, namely $\mathcal{M}_1$ over a frequency $f_1$ and $\mathcal{M}_2$ over a frequency $f_2$.  
It is important to emphasize here that we do not impose any constraints on the relation between the two antenna sets, $\mathcal{M}_1$ and $ \mathcal{M}_2$, nor the frequencies $f_1$ and $f_2$. Therefore, this general system model includes several special cases such as (i) the case when some antennas are common between the two antenna sets, or (ii) when the two antenna sets use the same frequency $f_1=f_2$. This allows us to use the results derived based on the adopted general model to draw important insights  for both co-located and distributed (cell-free) massive MIMO systems and for both TDD and FDD system operation modes, as we will discuss in \sref{sec:Exist}.

\textbf{Channel Model:} Let $h_{u,m}(f_1)$ denote the channel from user $u$ to antenna $m$ in the antenna set $\mathcal{M}_1$ at the frequency $f_1$. Assume that this propagation channel consists of $L$ paths. Each path $\ell$ has a distance $d_\ell$ from the user to antenna $m$, a delay $\tau_\ell$, and a complex gain $\alpha_\ell=|\alpha_\ell| e^{j \phi_\ell}$. The channel $h_{u,m}(f_1)$ can then be written as 
\begin{equation} \label{eq:channel}
h_{u,m}(f_1)=\sum_{\ell=1}^L |\alpha_\ell| e^{j \phi_\ell} e^{- j 2 \pi f_1 \tau_\ell}.
\end{equation}

Note that the magnitude of the path gain $|\alpha_\ell| $ of path $\ell$ depends on (i) the distance $d_\ell$ that this path travels from the user to the scatterer(s) ending at the receiver, (ii) the frequency $f_1$, (iii) the transmitter and receiver antenna gains, and (iv) the cross-section and dielectric properties of the scatterer(s).  The phase $\phi_\ell$ also depends on the scatterer(s) materials and wave incident/impinging angles at the scatterer(s).  Finally, the delay $\tau_\ell=\frac{d_\ell}{c}$, where $c$ is the speed of light. By reciprocity, we  consider $h_{u,m}(f_1)$ as also the downlink channel from antenna $m$ to user $u$.   Now, we define the $\left|\mathcal{M}_1\right| \times 1$ channel vector $\bh_{u, \mathcal{M}_1}(f_1)=[h_{u,1}(f_1), ..., h_{u,\left|\mathcal{M}_1\right|}(f_1)]^T$ as the channel vector from user $u$ to the antennas in set $\mathcal{M}_1$. Similarly, we define the channel vector $\bh_{u, \mathcal{M}_2}(f_2)$ for the channel between user $u$ and the antennas in set $\mathcal{M}_2$.  In the next section, we define the channel mapping problem.

\section{Channel Mapping Problem Definition} \label{sec:Problem}

In this paper, we are interested in answering the following question: {If the channels between the user $u$ and the antenna set $\mathcal{M}_1$ are given, can we use them to estimate the channels between the user and the other set of antennas $\mathcal{M}_2$?} In other words, {can we map $\bh_{u, \mathcal{M}_1}(f_1)$ to $\bh_{u, \mathcal{M}_2}(f_2)$}? 

If such mapping exists and we can characterize/model it, this can have great gains for massive MIMO systems. Essentially, this means that we need only to estimate the channels at one set of antennas and use them to directly predict the channels at all the other antennas for the same frequency or even a different frequency. This can dramatically reduce the training and feedback overhead in FDD massive MIMO systems. Further, for cell-free massive MIMO systems where a large number of distributed antennas are connected to the central unit, leveraging this mapping means that we need only to feed-forward a subset of the antenna channels to the central unit that maps them to the channels at  the rest of the antennas. This may significantly reduce the control overhead on the front-haul of the cell-free massive MIMO systems. With this motivation, we investigate the existence and modeling of this channel mapping function in this paper.  More formally, defining the channel mapping $\boldsymbol{\Phi}_{\mathcal{M}_1, f_1 \rightarrow \mathcal{M}_2, f_2}(.)$ as 
\begin{equation}
\boldsymbol{\Phi}_{\mathcal{M}_1, f_1 \rightarrow \mathcal{M}_2, f_2}: \left\{\bh_{u, \mathcal{M}_1}(f_1)\right\} \rightarrow \left\{\bh_{u, \mathcal{M}_2}(f_2) \right\},
\end{equation}
then we investigate the following two problems:
\begin{equation}
\hspace{-10pt}\textbf{Problem 1:} \ \textit{Does the mapping  $\boldsymbol{\Phi}_{\mathcal{M}_1, f_1 \rightarrow \mathcal{M}_2, f_2}(.)$ exist?} \nonumber
\end{equation}
\begin{equation}
\hspace{-4pt}\textbf{Problem 2:} \ 	\textit{If $\boldsymbol{\Phi}_{\mathcal{M}_1, f_1 \rightarrow \mathcal{M}_2, f_2}(.)$ exists, how to model it?} \nonumber
\end{equation}
 In the following section, we investigate the existence of the channel mapping function then we show in \sref{sec:DL_Mapping} that deep learning can be efficiently leveraged to model it.

\section{The Existence of Channel Mapping} \label{sec:Exist}
In this section, we consider the system and channel models in \sref{sec:Sys_Model} and examine the existence of the channel mapping function $\boldsymbol{\Phi}_{\mathcal{M}_1, f_1 \rightarrow \mathcal{M}_2, f_2}(.)$. First, we investigate the existence of the position to channel and channel to position mapping functions which will lead to the channel to channel mapping.

 \textbf{Existence of position to channel mapping:}
Consider the channel model in \eqref{eq:channel}, where the channel from user $u$ at position $\bsx_u$ to antenna $m$ is completely defined by the parameters $|\alpha_\ell|, \phi_\ell, \tau_\ell$ of each path and the frequency $f_1$. We note that these parameters, $|\alpha_\ell|, \phi_\ell, \tau_\ell$, are functions of the environment geometry, scatterer materials, the frequency $f_1$, in addition to the antenna and user positions, as explained in the discussion after \eqref{eq:channel}. Therefore, for a given static communication environment (including the geometry, materials, antenna positions, etc.), there exists a deterministic mapping function from the position $\bsx_u$ to the channel $h_{u,m}(f_1)$ at every antenna element $m$ \cite{Vieira2017}.  More formally, if $\left\{\bsx_u\right\}$ represents the set of all candidate user positions, with the sets $\left\{\bh_{u,\mathcal{M}_1}(f_1)\right\}$ and $\left\{\bh_{u,\mathcal{M}_2}(f_2)\right\}$ assembling the corresponding channels at antenna sets $\mathcal{M}_1$ and $\mathcal{M}_2$, then we define the position-to-channel mapping functions $\bg_{\mathcal{M}_1,f_1}(.)$ and $\bg_{\mathcal{M}_2,f_2}(.)$ as 
\begin{align}
\bg_{\mathcal{M}_1,f_1}&: \left\{\bsx_u\right\} \rightarrow \left\{\bh_{u,\mathcal{M}_1}(f_1)\right\}, \\
\bg_{\mathcal{M}_2,f_2}&: \left\{\bsx_u\right\} \rightarrow \left\{\bh_{u,\mathcal{M}_2}(f_2)\right\}.
\end{align}

Note that these deterministic mapping functions for a given communication environment scene can be numerically computed (or approximated) using ray-tracing simulations. However, we emphasize here that while we assume the existence of these position-to-channel mapping functions, our developed deep learning solutions will not require the knowledge of these mapping functions, as will be explained in \sref{sec:DL_Mapping}.

\textbf{Existence of channel to position mapping:} 
Next, we investigate the existence of the mapping from the channel vector $\bh_{u,\mathcal{M}_1}(f_1)$ of the antenna set $\mathcal{M}_1$ to the user position. For that, we adopt the following assumption. 
\begin{assumption}
The position-to-channel mapping function, $\bg_{\mathcal{M}_1,f_1}: \left\{\bsx_u\right\} \rightarrow \left\{\bh_{u,\mathcal{M}_1}(f_1)\right\}$, is bijective.    
\label{assumption1}	
\end{assumption}

This assumption means that every user position in the candidate set $\left\{\bx_u\right\}$ has a unique channel vector $\bh_{u,\mathcal{M}_1}(f_1)$. Here, we note that the bijectiveness of this mapping, $\bg_{\mathcal{M}_1,f_1}$, depends on several factors including (i) the number and positions of the antennas in the set $\mathcal{M}_1$, (ii) the set of candidate user locations, and (iii) the geometry and materials of the surrounding environment. While it is hard to guarantee the bijectiveness of $\bg_{\mathcal{M}_1,f_1}(.)$, it is important to note that this mapping is actually bijective with high probability in many practical wireless communication scenarios \cite{Vieira2017}.

Now, we define the channel-to-position mapping function $\bg^{-1}_{\mathcal{M}_1,f_1}(.)$ as the inverse of the mapping $\bg_{\mathcal{M}_1,f_1}(.)$, i.e., 
\begin{equation}
\bg^{-1}_{\mathcal{M}_1,f_1}:   \left\{\bh_{u,\mathcal{M}_1}(f_1)\right\} \rightarrow  \left\{\bsx_u\right\}
\end{equation}
Under Assumption 1, the inverse mapping, $\bg^{-1}_{\mathcal{M}_1,f_1}(.)$, exists. In fact, this inverse mapping is widely adopted in the wireless positioning and fingerprinting literature \cite{Vieira2017,Savic2015}. 

\textbf{Existence of channel to channel mapping:}
After investigating the existence of the position-to-channel and channel-to-position mapping functions, we are now ready to make the following proposition on the existence of the channel-to-channel mapping function, $\boldsymbol{\Phi}_{\mathcal{M}_1, f_1 \rightarrow \mathcal{M}_2, f_2}(.)$. 
\begin{proposition} \label{prop1}
	For a given communication environment, and if assumption 1 is satisfied, then there exists a channel-to-channel mapping function, $\boldsymbol{\Phi}_{\mathcal{M}_1, f_1 \rightarrow \mathcal{M}_2, f_2}(.)$, that equals
	\begin{align}
		\boldsymbol{\Phi}_{\mathcal{M}_1, f_1 \rightarrow \mathcal{M}_2, f_2} & = \bg_{\mathcal{M}_2, f_2} \circ \bg^{-1}_{\mathcal{M}_1, f_1}  \nonumber \\
		& : \left\{\bh_{u,\mathcal{M}_1}(f_1)\right\} \rightarrow \left\{\bh_{u,\mathcal{M}_2}(f_2)\right\}
	\end{align}
\end{proposition}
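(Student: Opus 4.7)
The plan is to prove the proposition by direct construction, assembling the claimed channel-to-channel mapping as the composition of two objects whose existence has already been secured in the preceding discussion. The whole content of the argument is that both antenna sets see channels determined by the \emph{same} underlying variable, namely the user position $\bsx_u$, so once we can invert the first forward law we can plug the recovered position into the second forward law.

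First, I would recall from the analysis following \eqref{eq:channel} that, for a fixed communication environment (geometry, materials, antenna placements), the path parameters $|\alpha_\ell|, \phi_\ell, \tau_\ell$ are deterministic functions of the user position, so the position-to-channel maps $\bg_{\mathcal{M}_1, f_1}$ and $\bg_{\mathcal{M}_2, f_2}$ exist and share the common domain $\{\bsx_u\}$ of candidate user positions. Next, I would invoke Assumption~\ref{assumption1} to conclude that $\bg_{\mathcal{M}_1, f_1}$ is a bijection onto its image $\{\bh_{u,\mathcal{M}_1}(f_1)\}$, which in turn guarantees the existence of the inverse $\bg^{-1}_{\mathcal{M}_1, f_1} : \{\bh_{u,\mathcal{M}_1}(f_1)\} \to \{\bsx_u\}$, exactly as defined in the excerpt.

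With these two ingredients in hand, I would define
\begin{equation}
\boldsymbol{\Phi}_{\mathcal{M}_1, f_1 \rightarrow \mathcal{M}_2, f_2} \bydef \bg_{\mathcal{M}_2, f_2} \circ \bg^{-1}_{\mathcal{M}_1, f_1},
\end{equation}
and verify well-definedness: the codomain of $\bg^{-1}_{\mathcal{M}_1, f_1}$ is $\{\bsx_u\}$, which is precisely the domain of $\bg_{\mathcal{M}_2, f_2}$, so the composition is legal. To check that it performs the claimed task, I would take an arbitrary $\bh_{u,\mathcal{M}_1}(f_1)$, observe that by bijectivity $\bg^{-1}_{\mathcal{M}_1, f_1}$ returns the unique position $\bsx_u$ that produced it, and then $\bg_{\mathcal{M}_2, f_2}(\bsx_u) = \bh_{u, \mathcal{M}_2}(f_2)$ by the very definition of $\bg_{\mathcal{M}_2, f_2}$. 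This delivers the statement of the proposition.

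There is no serious obstacle: the result is a structural fact about function composition, and the only subtle point worth flagging is that bijectivity is needed only for $\bg_{\mathcal{M}_1, f_1}$; the second map $\bg_{\mathcal{M}_2, f_2}$ is allowed to be merely a function, which is important because $\mathcal{M}_2$ may contain fewer antennas or be less spatially resolving than $\mathcal{M}_1$. I would close with a short remark emphasizing that although existence is settled, no closed form for $\boldsymbol{\Phi}_{\mathcal{M}_1, f_1 \rightarrow \mathcal{M}_2, f_2}$ is available in general, since $\bg_{\mathcal{M}_1, f_1}$ and $\bg_{\mathcal{M}_2, f_2}$ depend intricately on the environment; this motivates the deep-learning modeling approach pursued in \sref{sec:DL_Mapping}.
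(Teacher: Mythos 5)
Your argument is correct and is essentially identical to the paper's own proof: both establish existence of $\bg^{-1}_{\mathcal{M}_1, f_1}$ from Assumption~\ref{assumption1}, note that $\bg_{\mathcal{M}_2, f_2}$ exists for any fixed environment, and observe that the composition is well defined because the co-domain of the inverse map and the domain of $\bg_{\mathcal{M}_2, f_2}$ both equal $\left\{\bsx_u\right\}$. Your additional verification that the composite indeed returns $\bh_{u,\mathcal{M}_2}(f_2)$, and your remark that bijectivity is needed only for $\bg_{\mathcal{M}_1, f_1}$, are sound elaborations of the same route.
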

\begin{proof}
	The proof follows from (i) the existence of the  mapping $\bg^{-1}_{\mathcal{M}_1, f_1}(.)$  under assumption 1, (ii) the existence of the mapping $\bg_{\mathcal{M}_2, f_2}(.)$ for any given environment, and (iii) the existence of the composite function $   \bg_{\mathcal{M}_2, f_2} \circ \bg^{-1}_{\mathcal{M}_1, f_1}(.)$ since the domain of $ \bg_{\mathcal{M}_2, f_2}(.)$ is the same as the co-domain of $\bg^{-1}_{\mathcal{M}_1, f_1}(.)$, and they both equal to $\left\{\bsx_u\right\}$.  
\end{proof}

Consider a communication setup with a base station employing multiple antennas (co-located or distributed), Proposition \ref{prop1} means that once we identify a subset of these antennas that satisfy the bijectiveness condition in Assumption \ref{assumption1}, then there exists a way (mapping function) that can map the channels at this set of antennas to the channels at all other antennas, even if they are communicating at a different frequency. This result has interesting gains for both co-located and distributed massive MIMO systems. Next, we discuss some of these gains in more detail.
\begin{itemize}
	\item \textbf{FDD Co-located and Distributed Massive MIMO:} Note that the general setup adopted in this section and illustrated in \figref{fig:map} reduces to the special case of FDD massive MIMO systems when $\mathcal{M}_1 \subseteq \mathcal{M}_2 $ and when $f_1$ and $f_2$ represent the uplink and downlink frequencies. In this case, Proposition \ref{prop1} implies that only a subset $\mathcal{M}_1$ of the base station antennas need to be trained in the uplink. The uplink channels at these antennas can be directly mapped to the downlink channels at \textit{all} the antennas, which significantly reduces the training and feedback overhead in these systems. We will illustrate these gains in \sref{sec:Results}. 
	
	It is worth noting here that this result maps the channels at both space and frequency. Therefore, it includes the following two  special cases when only mapping in space or frequency is applied.
	\begin{enumerate}[I]
		 \item \textit{Mapping channels in space} is when $\mathcal{M}_1 \subseteq \mathcal{M}_2$ and  $f_1=f_2$ are both representing the downlink frequency . In this case, Proposition  \ref{prop1}  means that only a few antennas need to be trained in the downlink and the rest can be constructed by channel mapping. For example, consider a basestation with 100 antennas. If $5$ antennas are enough to satisfy the bijectiveness condition in Assumption 1, then we only need to downlink train and feedback $5$ antennas instead of the $100$ antennas, which is a missive reduction in the training/feedback overhead. 
		\item  \textit{Mapping channels in frequency} is when  $\mathcal{M}_1=\mathcal{M}_2$ and  $f_1$, $f_2$  represent the uplink and downlink channels. In this case, Proposition  \ref{prop1} means that the uplink channels can be directly mapped to the downlink channels which completely eliminates the downlink training/feedback overhead. 
	\end{enumerate} 
	
	\item \textbf{TDD Distributed (Cell-free) Massive MIMO:} In TDD cell-free massive MIMO systems, the distributed antenna terminals estimate the uplink channels and use it for the downlink transmission. To avoid the need for forwarding all the uplink channels from the terminals to the central processing unit, the initial proposals for cell-free massive MIMO systems adopted conjugate beamforming where every terminal independently designs  its downlink precoding weight. If feeding forward all the channels to the central processing is feasible, then several gains can be achieved, such as the ability to adopt more sophisticated precoding strategies and advanced user scheduling among others. Feeding forward all the channels to the central processing unit, however, is associated with high overhead that can limit the scalability of cell-free massive MIMO systems. Interestingly, Proposition \ref{prop1} suggests that only a subset $\mathcal{M}_1 \subseteq \mathcal{M}_2$ of these antennas  need to be forward their channels to the central unit which can map them to the channels at all the other antennas. This has the potential of significantly reducing the channel feed-forward overhead  with and  enabling the scalability of cell-free massive MIMO systems. We will further highlight these gains in \sref{sec:Results}.
\end{itemize}

It is also worth mentioning that the channel mapping result in Proposition \ref{prop1} can  also have several interesting applications in mmWave systems, such as using the channels collected at a few distributed antennas to predict the best beam at an antenna array \cite{Alkhateeb2018}, or using sub-6GHz channels to predict the mmWave blockages and beamforming vectors.

\textbf{Practical considerations:} Proposition \ref{prop1} implies that for a given communication environment and under Assumption \ref{assumption1}, there exists a deterministic channel-to-channel mapping function. In other words, given the channels at one antenna set $\mathcal{M}_1$, there is a way to predict exactly the channels at the other antenna set $\mathcal{M}_2$ which could even be a different frequency. In practice, however, there are a few factors that can  add some probabilistic error to this channel prediction such as the measurement noise, the limited ADC bandwidth, and the time-varying channel fading. Evaluating the impact of these practical considerations on the channel prediction error is an important future extension of this work.

\section{Deep Learning Based Channel Mapping: \\ The Main Motivation} \label{sec:DL_Mapping}

In \sref{sec:Exist}, we showed that the channel-to-channel mapping function exists for any given wireless communication environment once the bijectiveness condition in Assumption \ref{assumption1} is satisfied. If this mapping function exists, then how can we model it? The challenge is that this mapping function convolves the various aspects of the communication environment scene, such as the geometry, materials, etc., which makes its analytical characterization  hard--and mostly impossible. In this paper, we propose to leverage deep learning models to \textit{learn} and \textit{approximate} this channel-to-channel mapping function.

For the past three decades, Artificial Neural Networks (ANN) have been known to be good universal function approximators \cite{UnivApprox}. When this capability of neural networks is coupled with the new advances in the field of deep learning  \cite{DL:MethApp}, the result is a powerful tool that could be leveraged to learn challenging complex functions, like that describing the channel-to-channel mapping, $\boldsymbol{\Phi}_{\mathcal{M}_1, f_1 \rightarrow \mathcal{M}_2, f_2}$. More specifically, we propose to approximate $\boldsymbol{\Phi}_{\mathcal{M}_1, f_1 \rightarrow \mathcal{M}_2, f_2}$ using deep neural networks. A deep neural network is a  sequence of parameterized non-linear transformations. Each one roughly represents a single layer in the network, and the stacking of those layers defines the depth of the network. Mathematically, if we adopt an N-layer neural network to learn the channel-to-channel mapping, then the overall function modeled by the neural network can be expressed as a sequence of augmented functions as 
\begin{equation}
\bff_{\text{NN}}(\bh_{u,\mathcal{M}_1(f_1)}, \mathbf W) = \bff_1 \circ \bff_2 \circ \dots \circ \bff_N (\bh_{u,\mathcal{M}_1}(f_1), \mathbf W),
\end{equation}
in which $\mathbf W$ is the set of all parameters of a network, $\bh_{u,\mathcal{M}_1}(f_1)$ is an input vector to that network, and $\bff_{i}(.)$ for $i \in \{1, \dots, N\}$ is a single layer (single non-linear transformation). Our objective will then be to find the neural network weights $\bW$ that make $\bff_{\text{NN}}(.,\bW)$ an accurate approximation of $\boldsymbol{\Phi}_{\mathcal{M}_1, f_1 \rightarrow \mathcal{M}_2, f_2}(.)$, given a certain set of observations (training data points) about the channel vectors at the two antenna sets $\bh_{u,\mathcal{M}_1}(f_1), \bh_{u,\mathcal{M}_2}(f_2)$. In the next section, we will investigate the proposed deep-learning based channel mapping for cell-free massive MIMO systems.    

\section{Deep Learning Based Channel Mapping in Cell-Free Massive MIMO Systems} \label{sec:ML_Sol}
In this section, we investigate the developed deep-learning based channel mapping approach on a cell-free (distributed) massive MIMO setup.  First, we describe the adopted cell-free massive MIMO model in \sref{subsec:cell_free_model} and then we explain in \sref{subsec:sys_operation} how the proposed deep-learning based channel mapping can be applied to this  cell-free massive MIMO setup. Finally, we present the deep learning model in detail in \sref{subsec:DL_model}.

\subsection{Cell-Free Massive MIMO Model} \label{subsec:cell_free_model}

We consider a distributed massive MIMO setup where $M$ geographically-distributed antenna terminals are connected to a central processing unit. Further, we adopt an OFDM system model with $K$ subcarriers, and denote $f_\mathrm{UL}$ and $f_\mathrm{DL}$ as the uplink and downlink center frequencies that could be generally different. In the uplink transmission, if the user is sending a  signal $s_u$, then the received signal by every antenna $m, m=1, 2, ..., M$ at the $k$th subcarrier can be expressed as
\begin{equation}
y_{m,k}=h_{u,m} \left(f_\mathrm{UL}+k \Delta f\right) \ s_u + n_{m,k},
\end{equation}
where $n_m \sim \cN(0,\sigma_{n,m}^2)$ represents the receive noise and $s_u$ satisfies $\bbE[|s_u|^2]=P_u$ with the user transmit power $P_u$.  $\Delta f=\frac{\rm{BW}}{K}$ is the subcarrier spacing frequency  with the uplink bandwidth $\rm{BW}$. For simplicity, we assume that $\mathrm{BW}$ also represents the  downlink bandwidth. In the downlink, If the $M$ antennas are jointly transmitting a signal $s_{M}$, then the received signal at user $u$ and $k$the subcarrier can be written as 
\begin{equation}
y_{u,k}=\sum_{m=1}^{M}  h_{u,m}(f_\mathrm{DL}+k\Delta f)  \ v_{m,k} \ s_{M} + n_{u,k},
\end{equation}
where $n_u \sim \cN(0,\sigma_{n,u}^2)$ represents the receive noise and $s_M$ satisfies $\bbE[|s_M|^2]=\frac{P_\mathrm{T}}{K}$ with the total base station transmit power $P_T$.  The complex wights $v_{m,k} \in \mathbb{C}, \forall m$ denote the beamforming weights at the transmit antennas. For simplicity, we adopt conjugate beamforming where every  beamforming weight $v_{m,k}$ is designed as $v_{m,k}= \frac{1}{\kappa} h^*_{u,m}(f_\mathrm{DL}+k\Delta f)$. The factor $\kappa=\left\| \bh_u(f_\mathrm{DL}+k\Delta f) \right\|_2$ with $\bh_u(f)=[h_{u,1}(f), ...., h_{u,M}(f)]^T$ enforces the power constraint on the  beamforming vector, i.e., it ensures the downlink beamforming vector has a unit norm. With this system model, the downlink achievable rate averaged over the $K$ subcarriers can be expressed as 
\begin{equation}
R=\frac{1}{K} \sum_{k=1}^K \log_2\left(1+\mathsf{SNR} \left\| \bh_u(f_\mathrm{DL}+k\Delta f) \right\|_2^2\right),
\end{equation}
with the signal-to-noise ratio $\mathsf{SNR}=\frac{P_\mathrm{T}}{K \sigma_{n,u}^2}$. Next, we explain the  system operation for the deep-learning based channel mapping, before describing the proposed machine learning model in \sref{subsec:DL_model}.

\subsection{Proposed Channel Mapping with Deep Learning} \label{subsec:sys_operation}
Acquiring the downlink channel knowledge  of the adopted cell-free (distributed) massive MIMO system model normally requires high training and feedback overhead that scales with the number of antennas. In this section, we propose to leverage the interesting channel mapping results in Sections \ref{sec:Exist}-\ref{sec:DL_Mapping} to completely eliminate this overhead. More specifically, let $\mathcal{M}_2$ represent the set of all the $M$ distributed antennas, we propose to learn the mapping from the uplink channels at a subset of the antennas $\mathcal{M}_1 \subset \mathcal{M}_2$ to the downlink channels at all the $M$ antennas, i.e., to the antennas in $\mathcal{M}_2$. \textbf{Learning this mapping means that we only need to train the uplink channels at the antenna subset $\mathcal{M}_1$ and feed these channels to the central processing unit that will use them to directly predict the downlink channels at all the $M$ antennas and at a generally different frequency.} As we show in \sref{sec:Results} for an example of $M=64$ antennas, only $4-8$ antennas in $\mathcal{M}_1$ are enough to have a very good quality for the channel prediction at the $64$ antennas, which significantly reduces the training and feedback overhead as well as the uplink channel feedforward overhead through the fronthaul connecting the distributed antennas and the central unit. The implementation of the proposed deep-learning based channel mapping  entails two modes of operation: learning and prediction. The following is a brief description of both modes.

\textbf{Learning mode:}  In this mode of operation, the adopted distributed massive MIMO system operates as normal, i.e., as it will do if there is no machine learning, and the learning will happen on the side. More specifically, at every coherence time, the uplink and downlink channels will be acquired through two separate and traditional training phases. Then, the uplink channels at the subset of antennas $\mathcal{M}_1$ denoted $\bh_{u,\mathcal{M}_1} \left(f_\mathrm{UL}+k \Delta f\right), \forall k,$ and the downlink channels at all the antennas, denoted $\bh_{u,\mathcal{M}_2} \left(f_\mathrm{DL}+k \Delta f\right), \forall k$, will be saved as a new data point for training the deep learning model. After a sufficient number of data points are collected, the deep learning model (that will be described in detail in \sref{subsec:DL_model}) will be trained to learn the channel-to-channel mapping function from the uplink channels at the antenna subset $\mathcal{M}_1$ to the downlink channels at the $M$ antennas. 

\textbf{Prediction mode:} In this mode of operation, the user needs to send only one uplink pilot that will be used to estimate the uplink channel at the antenna subset $\mathcal{M}_1$. These channels, $\bh_{u,\mathcal{M}_1} \left(f_\mathrm{UL}+k \Delta f\right), \forall k,$, will then be fed forward to the central unit where the trained deep learning model will  be used to directly predict the downlink channels at all the antennas, $\bh_{u,\mathcal{M}_2} \left(f_\mathrm{DL}+k \Delta f\right), \forall k$. These downlink channels will then be used to design the downlink precoding matrix. 

\begin{figure}[t]
	\centering
	\includegraphics[width=0.6\linewidth]{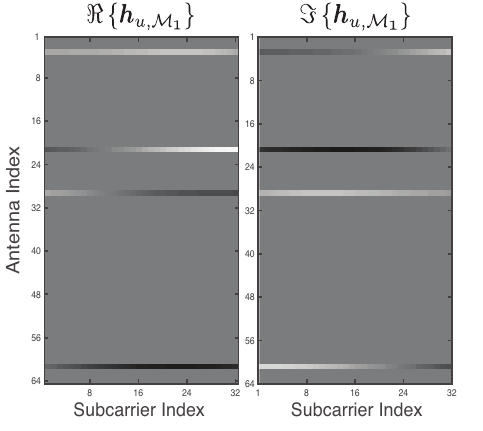}
	\caption{An example of a 3D space-frequency array. The rows with varying brightness represent the channel estimates of the selected antennas across all $K$ sub-carriers while the solid gray rows represent zeroed-out estimates. The x-axis corresponds to subcarrier index, and the y-axis corresponds to antenna index.}
	\label{mask}
\end{figure}

\begin{figure*}[t]
	\centering
	\includegraphics[width=.9\linewidth]{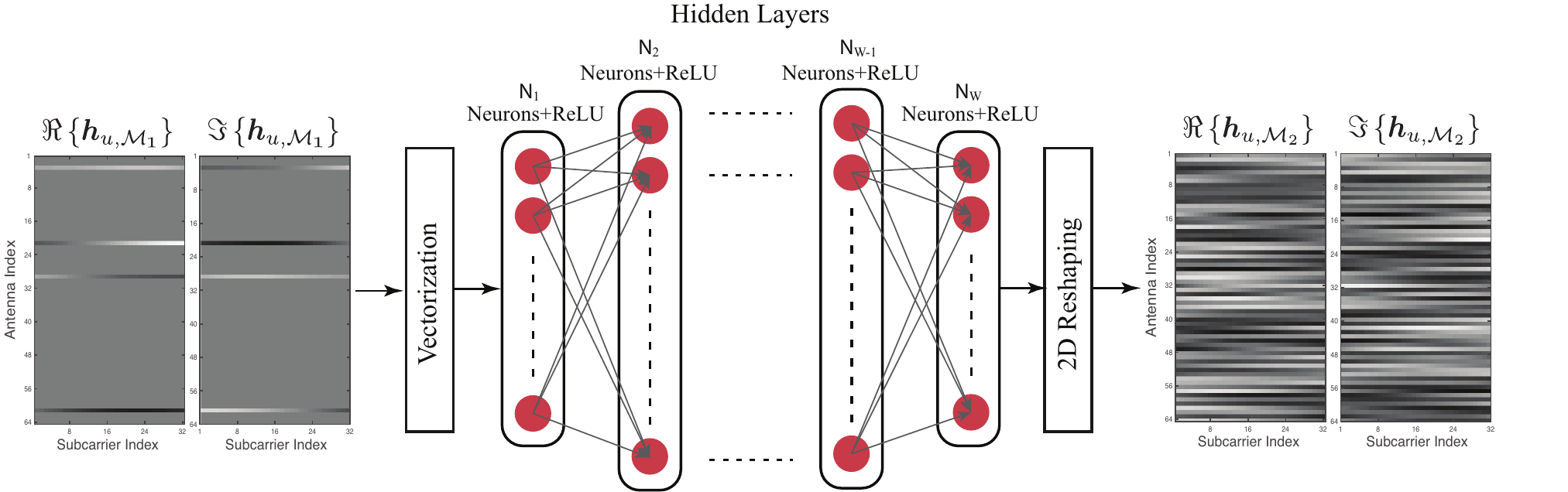}
	\caption{The architecture of the proposed fully-connected network. Every neuron sees all the inputs to its layer, hence the name fully connected. The arrays on the left represent the  channels in set $\mathcal M_1$ while those on the right represent the ``predicted'' channels in set $\mathcal M_2$ which comprises all the $M$ antennas.}
	\label{Arch}
\end{figure*}

\subsection{Deep Learning Model} \label{subsec:DL_model}

Given the proposed system operation above, the following presents a detailed discussion on the deep learning model designed for channel mapping. In particular, it describes the model architecture, the necessary data preprocessing, and how the model is trained.  

\textbf{Neural Network Architecture:} Owing to the spatial and frequency dependencies in the channel estimates, a feedforward fully-connected network, also called Multi-Layer Perceptron (MLP) network \cite{DeepLearning}, represents an interesting choice for the deep learning model. Such neural network type has a very desirable property; each neuron in every layer of the network has access to all outputs from the previous layer--- hence the name. This is expected to help the model learn those channel dependencies and, therefore, become capable of reconstructing the full channel estimates.

The proposed model consists of $W$ layers. Each of them is composed of $N_w$ neurons, where $w\in \{1,\dots,W\}$, followed by $N_w$ non-linearity functions, specifically Rectified Linear Units (ReLU), as depicted in Figure.\ref{Arch}. The model is designed such that it learns three major non-linear transformations. The first squeezes the input into a feature vector with lower dimensionality. This is an attempt to reduce the distortion in the input caused by the missing channel estimates--more on that in the preprocessing subsection next. The second is a sequence of $W-2$ transformations, taking place in what is commonly known as the ``hidden layers''. It starts by taking the first-layer output feature to a high dimensional space, and, then, it progressively passes that new high-dimensional feature through a series of within-space transformations. The goal here is to learn a suitable high-dimensional representation that could help the third, and last, major transformation reconstruct the desired output. That last transformation is basically a projection form the feature space of the hidden layers onto the output space.

\textbf{Preprocessing:}
Getting the deep learning model to perform efficiently, whether in the training or prediction modes, requires some preprocessing of its inputs and target outputs \cite{DeepLearning,LeCun2012}. Data samples usually go through a sequence of preprocessing operations, constituting the preprocessing pipeline, The first two operations, and most commonly used, are centralization and normalization. Channel arrays are centralized using the dataset estimated mean, $\mu$. It is basically the average of all complex values in the dataset. Normalization, on the other hand, is the process of re-scaling the data values to be within the interval $[-1, 1]$. It is performed using the inverse of the maximum absolute value in the dataset. That value is given by:
\begin{equation} \label{MaxAbs}
\Delta = \max_{\substack{ { i = 1, \dotsc, K} \\{j = 1, \dotsc, M }} } \left| h_{i,j} \right|,
\end{equation}
where $h_{i,j}$ is the complex channel estimate of the $i^{th}$ antennal and $k^{th}$ subcarrier, $K$ is the total number of subcarriers of the systems, and finally $M$ is the total number of antennas.

The nature of the problem and the choice of neural network also mandate two more operations in the preprocessing pipeline: array masking (or sampling) and input and output re-shaping. As the core idea of the proposed solution is to perform channel mapping in space and frequency, the channel estimates of all the antennas across all sub-carriers are organized into a 3D channel array where the first two dimensions represent space and frequency and the third represents the real and imaginary components. This array is, then, masked (or smapled) across the space dimension to generate a sparse input array like the one in Figure \ref{mask}. This masking operation retains channel information from a subset of the system antennas while the rest are zeroed-out. This is done using a randomly-generated binary mask with ones across the rows corresponding to the selected antennas. Mathematically, it is expressed as follows:
\begin{equation}
\mathbf h_\mathrm{masked} = \mathbf U\odot \mathbf h_{u,\mathcal{M}_2},
\end{equation}
where $ \mathbf h_{u,\mathcal{M}_2}$ is the input $M \times K \times 2$ array of channels with the third dimension representing the real/imaginary components.  $\mathbf U$ is the $M \times K \times 2$ binary mask, $\odot$ is an element-wise multiplication operator, and $\mathbf h_\mathrm{masked}$ is the $M \times K \times$ masked input.

The second operation in this stage is input and output reshaping. Majority of DL software packages\footnote{These packages are the main development and deployment tools for deep learning models. Some examples include MATLAB deep learning toolbox, TensorFlow and Caffe.} perform real-valued operations, so the complex-valued data samples need to be presented to the network as real values. To meet that requirement, the masked array $\bh_\mathrm{masked}$ is simply flattened to become a single 1-dimensional vector of length $2 K M$. Such choice of reshaping is suitable given the type of the designed neural network. A fully-connected network requires inputs to be in the form of vectors owing to the fact that a single forward pass usually corresponds to a sequence of matrix-multiplication operations \cite{DeepLearning}.

\textbf{Model Training:}
The deep learning model has to go through a training stage, in which it learns the mapping from the input to the output, learning how to reconstruct full channel arrays. The chosen learning approach in this work is supervised learning. It comprises two main components: a training dataset and a loss function. The former is a set of data samples collected from the environment, each of which is a pair of input and its corresponding desired output---hence the name ``supervised learning''. The latter, the loss function, is basically a measure of the model performance during training. The minimization of that loss is an optimization process that amounts to the optimal model parameters. 

It is customary in deep learning problems to use a regression loss function when the model needs to be fitted to some desired response. Since the addressed problem involves channel array reconstruction, the popular Normalized Mean Squared Error (NMSE) is used to assess the output quality. It is given by the following expression:
\begin{equation}\label{instaLoss}
l = \frac{1}{2n}\sum_{i=1}^{n} \frac{||y_{out}-y_{des} ||_{2}^{2}}{||y_{des} ||_2^2},
\end{equation}
where $y_{out},y_{des} \in \mathbb C^n$ are vectorized single model output and its desired response, respectively, and $n$ is the number of antennas multiplied by number of subcarriers. For deep learning models, training is usually performed using small ``batches'' of inputs and their desired responses, commonly called mini-batches, instead of a single pair. This motivates the use of average NMSE:
\begin{equation}
\mathcal{L} = \frac{1}{B} \sum_{i=1}^{B} l_i,
\end{equation}
in which $B$ is the size of the mini-batch.

\section{Experimental Results and Analysis} \label{sec:Results}
In this section, we evaluate the performance of our proposed deep-learning based channel mapping solution for massive MIMO systems and illustrate its potential gains for reducing the channel training and feedback overhead.  We will first describe the adopted scenario and dataset in \sref{subsec:scenario} and the considered deep learning parameters in \sref{subsec:model} before discussing the simulation results in \sref{subsec:sim}. The dataset and code files of this paper are available at \cite{DeepMIMO,MapCode}.

\subsection{Scenario and Dataset} \label{subsec:scenario}
In these experiments, we consider the indoor distributed massive MIMO  scenario 'I1' that is offered by the DeepMIMO dataset \cite{DeepMIMO} and is generated based on the accurate 3D ray-tracing simulator Wireless InSite \cite{Remcom}. This scenario is available at two operating frequencies $2.4$GHz and $2.5$GHz, which emulate the uplink and downlink carrier frequencies of the massive MIMO setup in \sref{subsec:cell_free_model}. As depicted in \figref{env}, the 'I1' scenario comprises a $10$ m $\times 10 $ m room with two tables and $M=64$ antennas tiling up part of the ceiling, i.e. at a height of $2.5$m from the floor. The users are spread across two different x-y grids, each of which is above the floor with a hight of $1$m.

Given this ray-tracing scenario, we generate the DeepMIMO dataset based on the parameters in Table.\ref{EnvPara}. This DeepMIMO dataset constructs the channels between every candidate user location and every antenna terminal at the uplink and downlink frequencies, $2.4$GHz and $2.5$GHz.  
To form training and testing datasets, the elements of the generated DeepMIMO dataset are first shuffled and, then, split into two subsets with 4:1 ratio, namely a training dataset with 80\% of the total size and a testing dataset with the other 20\%. These  datasets  are then used to train the deep learning model and evaluate the performance of the proposed solution.

\begin{table}[t]
\caption{The adopted DeepMIMO dataset parameters}
\centering
\begin{tabular}{|c | c |}
\hline
	Parameter & value \\
	\hline\hline
    Name of scenario & I1-2.4GHz and I1-2.5GHz \\
    \hline
    Number of BSs & 64 \\
    \hline
    Active users & Row 1 to 502 \\
    \hline
    Number of BS antennas in (x, y, x)  & (1,1,1) \\
    \hline
    System BW & 0.02 GHz\\
    \hline
    Number of OFDM sub-carriers & 64\\
    \hline
    OFDM sampling factor & 1 \\
    \hline
    OFDM limit & 16\\
    \hline
\end{tabular}
\label{EnvPara}
\end{table}

\begin{figure}[hbt]
	\centering
	\includegraphics[width=0.8\linewidth]{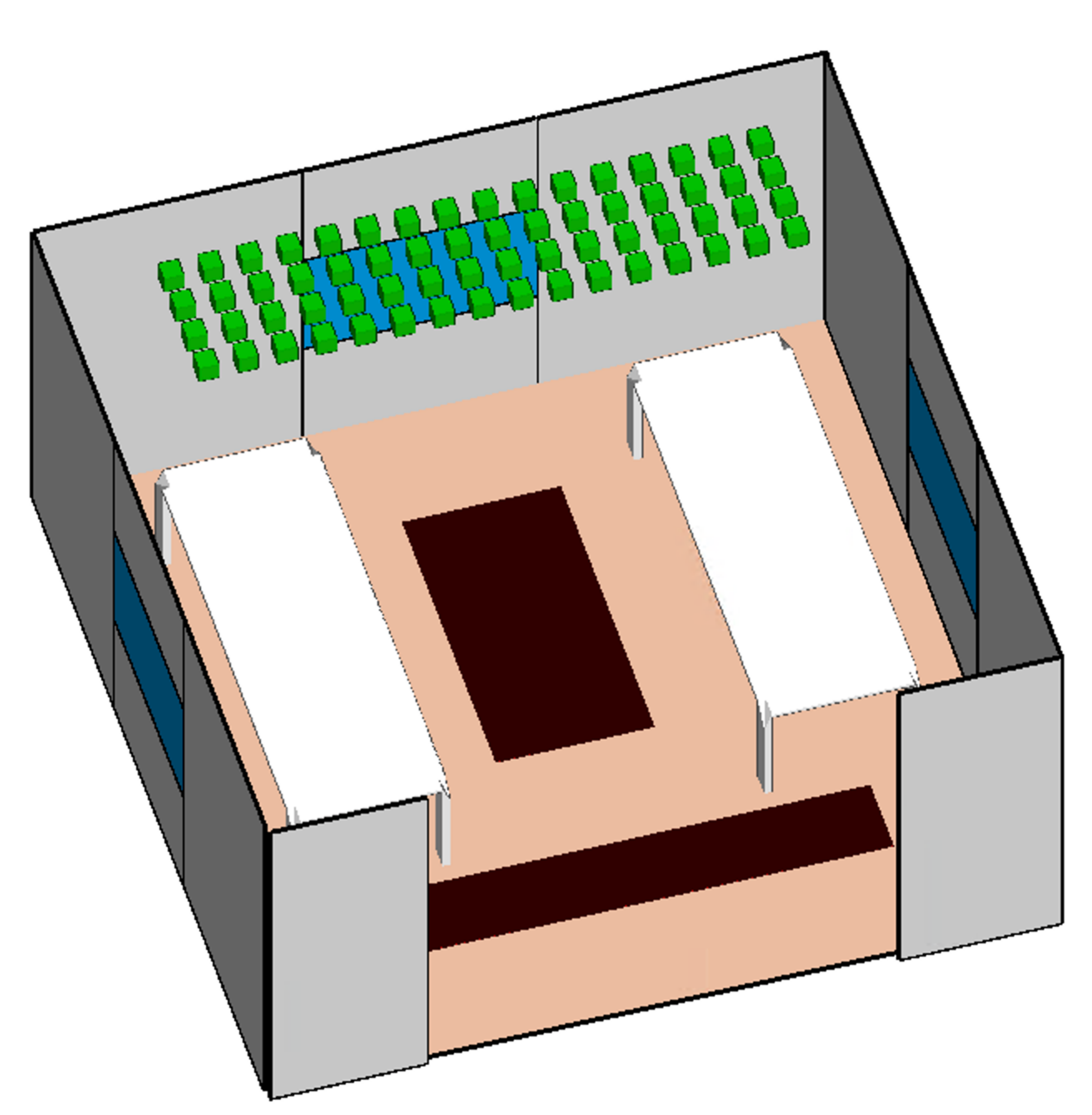}
	\caption{An approximate depiction of the considered environment (scenario). The green little boxes on the ceiling represent the distributed antennas of the base station/access point. The two dark brown rectangles are two grids representing possible user locations. This ray-tracing scenario is constructed using the Wireless InSite by Remcom \cite{Remcom}.}
	\label{env}
\end{figure}

\subsection{Model Training and Testing} \label{subsec:model}
In all experiments in this paper, the fully-connected network used has 4 layers with 1024, 4096, 4096, and 2048 neurons for each layer respectively. This network is trained for about 17 epochs on approximately 121 thousand training samples. Those samples are randomly drawn from the two grids shown in Figure.\ref{env}. The training is performed using ADAptive Moment estimation (ADAM) algorithm \cite{ADAM}---an optimization algorithm with stochastic gradient descent---with learning rate of $1\times10^{-3}$. The network is regularized using L2-norm with weight decay of $1\times10^{-4}$. Following the training, the network performance is tested on approximately 30 thousands \textit{unseen} test samples, also drawn from the same two user grids. The DL modeling, training, and testing  are done using  the MATLAB Deep Learning Toolbox. Code files for the network and its training and testing are available at \cite{MapCode}.

\begin{figure}[t]
	\centering
	\includegraphics[width = 1\linewidth]{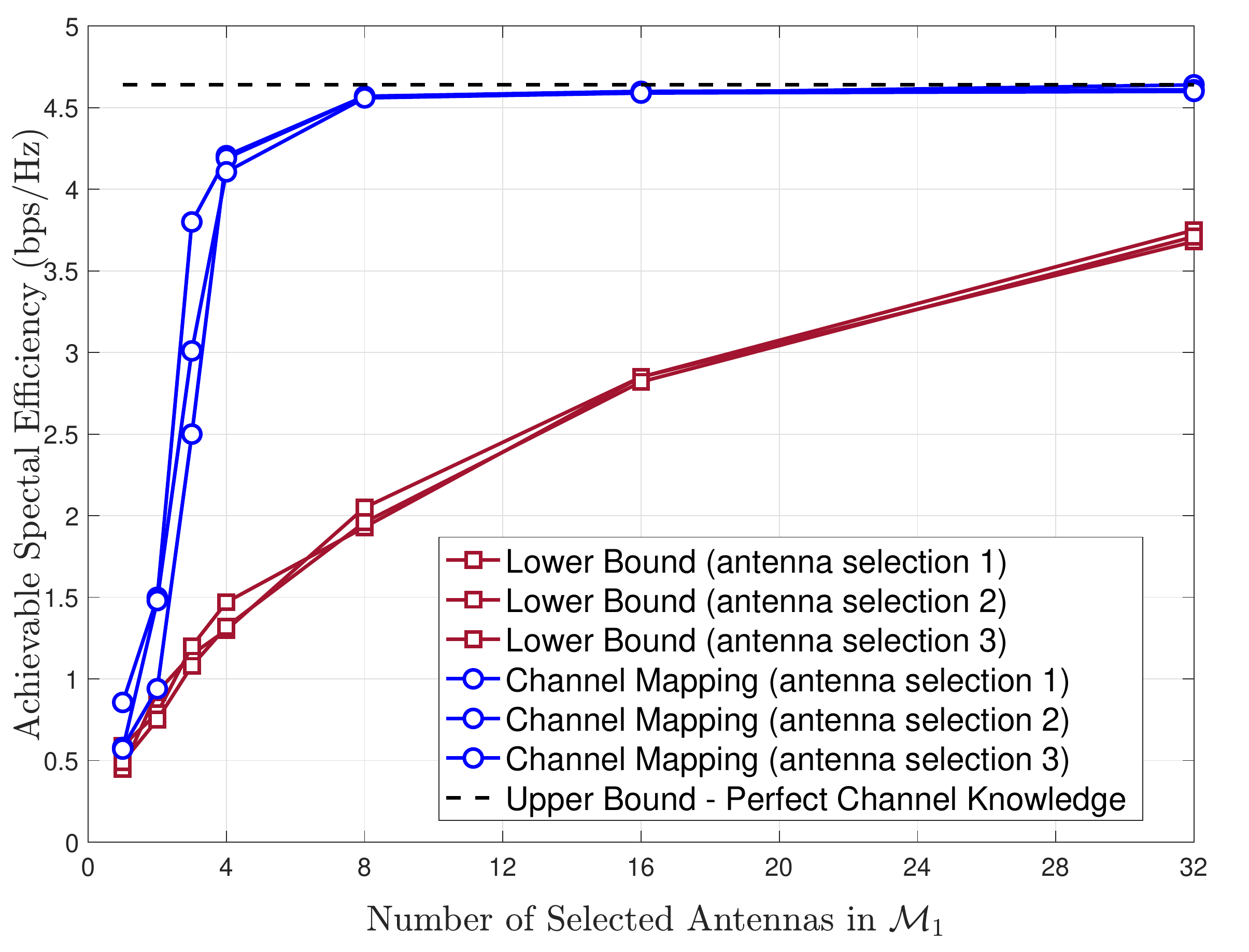}
	\caption{The achievable rates of the proposed deep learning based channel mapping approach versus different numbers of antennas in the subset $\mathcal{M}_1$.  The black dashed line represents the upper bound achieved only with full knowledge of the channels whereas the set of red lines represent the lower bound incurred when only the antennas in the subset $\mathcal{M}_1$ are used.}
	\label{fig1}
\end{figure}

\subsection{Performance Evaluation} \label{subsec:sim}

Considering the distributed massive MIMO setup in \sref{subsec:scenario} and the deep learning model in \sref{subsec:model}, we evaluate the performance of the proposed channel mapping solution as follows. First, for the user positions in the training dataset, we use the DeepMIMO dataset to construct the uplink channels $\bh_{u,\mathcal{M}_1} \left(f_\mathrm{UL}+k \Delta f\right), \forall k$ at the antenna subset $\mathcal{M}_1$, and the  downlink channels $\bh_{u,\mathcal{M}_2} \left(f_\mathrm{DL}+k \Delta f\right), \forall k$ at all the $M=64$ antennas. These channels are then used to train the adopted deep learning model. Note that in this paper, the antennas in the subset $\mathcal{M}_1$ are randomly chosen from the $64$ antennas. Investigating the optimal selection of these antennas conditioning on the given environment is an interesting future research direction.  After training, we use these deep learning model to predict the full downlink channels $\hat{\bh}_{u,\mathcal{M}_2} \left(f_\mathrm{DL}+k \Delta f\right), \forall k$ for the test users given only the uplink channels at the antenna subset $\mathcal{M}_1$. These predicted channels are then used to construct the downlink conjugate beamforming vectors $\bv_k=\hat{\bh}^*_{u,\mathcal{M}_2} \left(f_\mathrm{DL}+k \Delta f\right), \forall k$ with which we calculate the achievable rates using the proposed channel mapping approach. These rates are also compared with (i) the upper bound that represents the rates when the $M$ antennas apply conjugate beamforming with perfect channel knowledge and (ii) the lower bound which is the achievable rate with perfect channel knowledge when only the antennas in the subset $\mathcal{M}_1$ apply conjugate beamforming.

In \figref{fig1}, we consider the indoor massive MIMO setup in \sref{subsec:scenario}, and construct the channels between the users and the antennas using the DeepMIMO dataset \cite{DeepMIMO} considering only the dominant (strongest) path and assuming $f_\mathrm{UL}=f_\mathrm{DL}=2.5$GHz. For this setup, \figref{fig1} compares the achievable rates using the predicted downlink channels with the upper and lower bounds for different numbers of antennas in the subset $\mathcal{M}_1$.  Interestingly, \figref{fig1}  shows that with only 4 antennas, i.e., around $6\%$ of the total number of antennas, the deep learning based predicted (mapped) channels are capable of achieving more than 4 bits/sec/Hz, which brings them within 7\% of the upper bound. This gap falls rapidly and approaches the upper bound with only 8 antennas in the subset $\mathcal{M}_1$. This result highlights the potential gains of the proposed channel mapping approach that significantly reduces the training/feedback overhead. It is worth mentioning at this point that since the antennas in the subset $\mathcal{M}_1$ are randomly selected, we expect the performance of the predicted channels to depend on this selection. As we note from \figref{fig1}, the importance of the specific antenna selection is very high when the number of antennas in $\mathcal{M}_1$ is small. That is very clear form the case of 3 antennas, as the random choice of antennas incurs an achievable rates varying widely between 2.5 to almost 3.7 bits/sec/Hz. Such fluctuation should not be a surprise since the selected antennas could be very close to each other, preventing the network from training with diverse channels.

\begin{figure}[t]
	\includegraphics[width = 1\linewidth]{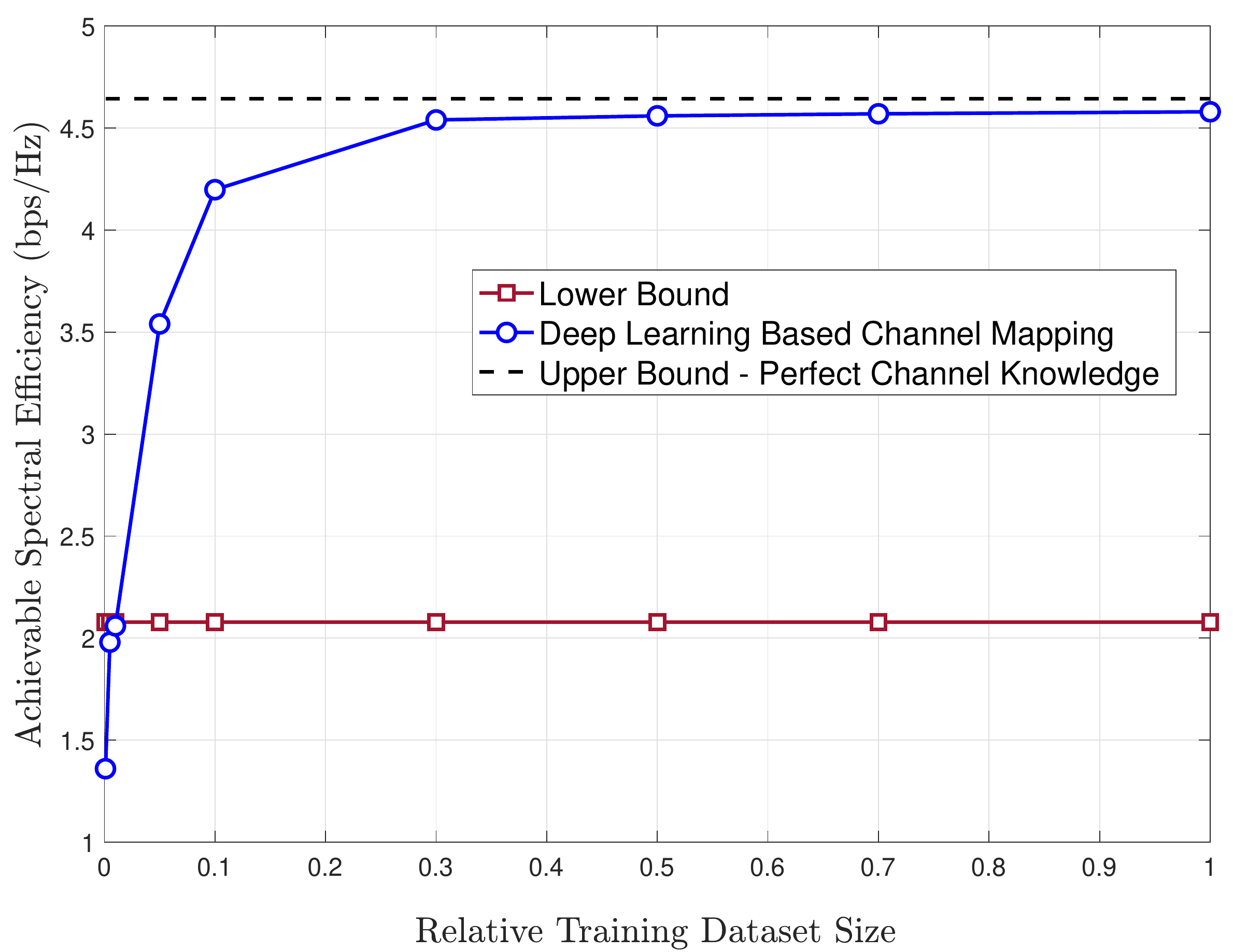}
	\caption{The effect of the training dataset size on the performance of the deep learning based channel mapping approach. The size here is relative to the full dataset size, around 121 thousand data samples.}
	\label{fig2}
\end{figure}

The size of the training dataset is a crucial hyper-parameter for the proposed solution. Figure.\ref{fig2} shows how the performance of the adopted fully-connected neural network improves as the size of the training dataset grows. The case considered here is still the single path case, and the antenna subset $\mathcal{M}_1$ is fixed with $8$ antennas. The network is trained from scratch for every training dataset size and is tested on a fixed-size test set. Both the new training and testing datasets are drawn randomly from the original sets. As expected, the performance improves rapidly as the size increases; however, it is interesting to note that with 8 antennas in $\mathcal{M}_1$, about 30\% of the total training dataset size is enough to approach the upper bound.

Next, we evaluate the performance of the proposed deep-learning based channel mapping approach for the more difficult(and also realistic) case of multi-path channels.  In \figref{fig3}, we consider the massive MIMO setup in \sref{subsec:scenario} with multi-path channels constructed based on the $5$ strongest paths connecting every user and bsse station antenna. \figref{fig3} compares the achievable rates of the proposed channel mapping approach with the upper and lower bounds for two important cases, namely "within-band" and "cross-band". In the "within-band" case, $f_\mathrm{UL}=f_\mathrm{DL}=2.5$GHz, i.e.,  the channel is mapped from the antenna subset $\mathcal{M}_1$ to all the antennas within the same frequency band. \figref{fig3} shows that the performance trend  of the proposed channel mapping approach is similar to that in \figref{fig1}, yet the rate at which the performance improves is not as fast as that in the single-path case. Here, the performance using 4 antennas is within more than 55\% of the upper bound. However, just doubling that number to be 8 antennas boosts the performance to be within 9\% of the upper bound. The upper bound again is approached with merely quarter of the total number of antennas.

\begin{figure}[t]
	\includegraphics[width =\columnwidth]{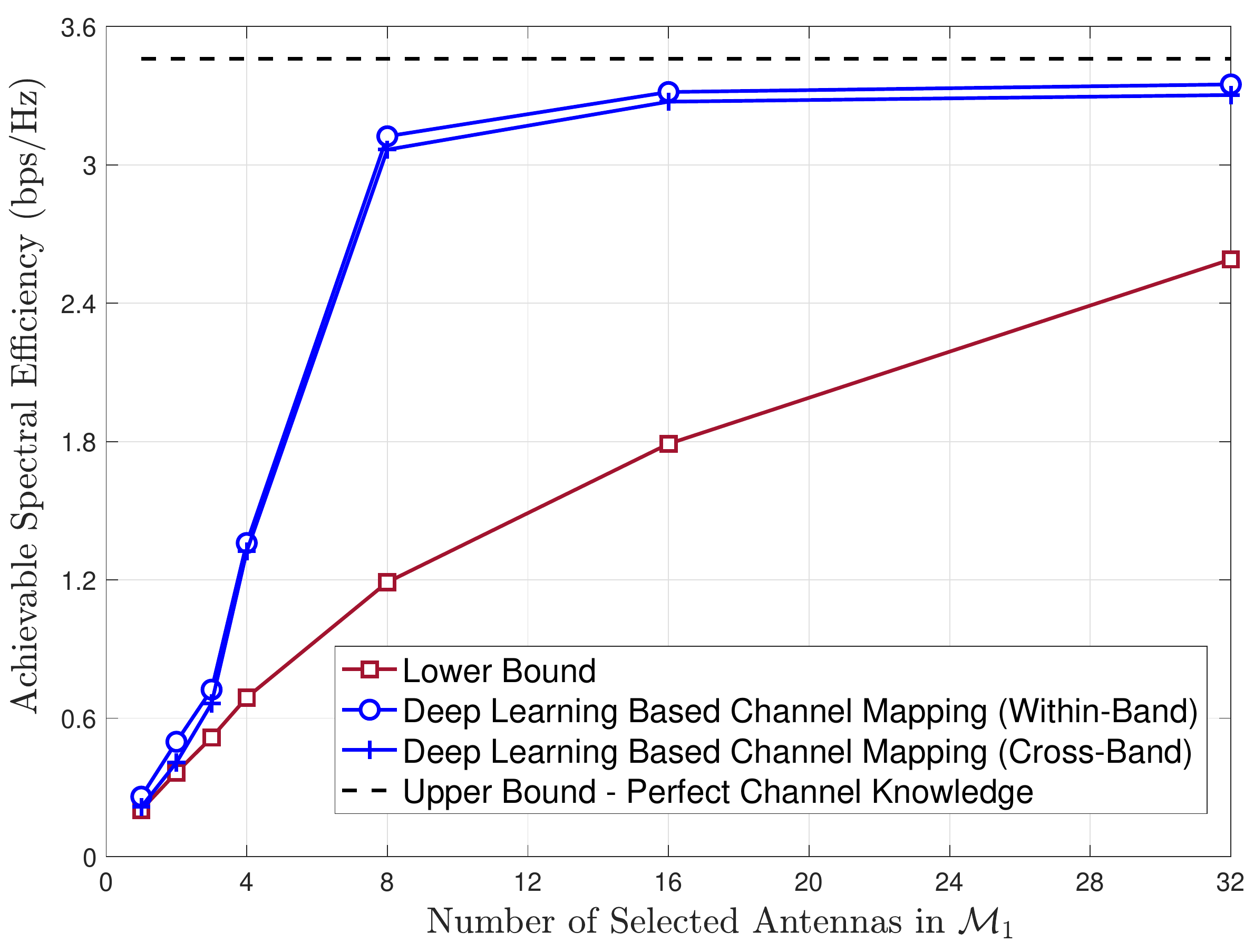}
	\caption{The achievable rates of the proposed deep learning based channel mapping approach versus different numbers of antennas in the subset $\mathcal{M}_1$.  The multi-path channels are constructed using the DeepMIMO dataset \cite{DeepMIMO} considering the strongest $5$paths. The "in-band" scenario represents the case when the channels are mapped from the antennas subset $\mathcal{M}_1$ to all the $64$ antennas at the same frequency band while the "cross-band" case represents the mapping from the uplink channels at  $2.4$GHz and at the antenna subset $\mathcal{M}_1$ to all the $64$ antennas at $2.5$GHz, i.e., the channel mapping is across both space and frequency.}
	\label{fig3}
\end{figure}

In the more challenging "cross-band" case, $f_\mathrm{UL}=2.4$GHz and $f_\mathrm{DL}=2.5$GHz, as the uplink channel at $2.4$GHz and at only a subset of the antennas $\mathcal{M}_1$ are mapped to all the antennas at the downlink frequency $2.5$GHz, i.e. \textit{the channel is mapped across both space and frequency}. This is quite interesting to study since it highlights the ability to predict the downlink channels directly from the uplink channels, which completely eliminates the downlink training/feedback overhead. As shown in \figref{fig3},  the proposed deep-learning based channel mapping approach is clearly capable of learning such mapping very well; as the achievable rate with the predicted downlink channels  converges very closely to the upper bound with only $8-16$ antennas. 

\section{Conclusion} \label{sec:Conclusion}

In this paper, we introduced the new concept of channel mapping in space and frequency, where the channels are mapped from one set of antennas at one frequency band to another set of antennas at another frequency band. For any given environment, we first proved that this channel-to-channel mapping function exists under the condition that the mapping from the candidate user locations to the channels in the first set of antennas is bijective---a condition that can be satisfied with high probability in several practical communication scenarios. Then, we proposed to leverage deep neural networks to learn this complex channel-to-channel mapping function that involves the various elements of the environment. For FDD/TDD cell-free massive MIMO systems, we applied the channel mapping concept to reduce the downlink training/feedback overhead as well as the fronthaul signaling overhead. Extensive simulation results based on accurate 3D ray-tracing highlighted the potential gains of the channel mapping approach. For example, for a cell-free massive MIMO setup of 64 distributed antennas, the results show that the channels at only 4-8 antennas are required to be able to efficiently predict the channels at all the 64 antennas and achieve near-optimal data rates compared to perfect channel knowledge.  This highlights the potential gains of the proposed channel mapping approach in significantly reducing the training/feedback overhead in massive MIMO systems. For future work, it is interesting to investigate the impact of various practical considerations, such as the measurement noise, the limited ADC bandwidth, and the time-varying channel fading, on the performance of the proposed channel mapping approach.

\balance

\end{document}